\documentclass[runningheads]{waica}
\usepackage{amsmath,amsfonts}
\usepackage{algorithmic}
\usepackage{algorithm}
\usepackage{array}
\usepackage[caption=false,font=normalsize,labelfont=sf,textfont=sf]{subfig}
\usepackage{textcomp}
\usepackage{stfloats}
\usepackage{url}
\usepackage{verbatim}
\usepackage{graphicx}
\usepackage{cite}
\usepackage{enumitem}
\usepackage{xcolor}
\usepackage{multirow}
\usepackage{arydshln} 
\usepackage{adjustbox}
\usepackage{cleveref}
\usepackage{bbding}
\usepackage{booktabs}

\begin{document}

\title{Hierarchical Constrained Reinforcement Learning with Dynamic Boundary for Spatio-Temporal Vehicle-to-Grid Scheduling}

\titlerunning{HPC-RL}

\author{Haoyu Yan, Shutong Ding, Jiebao Zhang, Xi Yao, Yu Liu, Haoyu Wang, Chenchi Luo and Ye Shi
        <-this 
\thanks{(Corresponding authors: Ye Shi)}
\thanks{Haoyu Yan, Shutong Ding, Jiebao Zhang, Yu Liu, Haoyu Wang and Ye Shi are with the School of Information Science and Technology, ShanghaiTech University. (email:{\tt yanhy2023, dingsht, zhangjb2023, liuyu, wanghy, shiye@shanghaitech.edu.cn})}
\thanks{Xi Yao and Chenchi Luo are with China Mobile Shanghai ICT Co., Ltd. (email: {\tt yaoxi, luochenchi@cmsr.chinamobile.com})} 
}
\author{Haoyu Yan \inst{1} \and Shutong Ding \inst{1} \and Jiebao Zhang \inst{1} \and Xi Yao \inst{2} \and Yu Liu \inst{1} \and Haoyu Wang \inst{1} \and Chenchi Luo* \inst{2}\and Ye Shi*\inst{1}}
\footnotetext{*Chenchi Luo and Ye Shi are corresponding authors.}

\institute{The School of Information Science and Technology, ShanghaiTech University. \email{\tt yanhy2023, dingst,zhangjb2023, liuyu, wanghy, shiye@shanghaitech.edu.cn}\and
China Mobile Shanghai ICT Co., Ltd. \email{\tt yaoxi, luochenchi@cmsr.chinamobile.com} }
\authorrunning{Haoyu Yan Author et al.}
\maketitle


\begin{abstract}
The rapid proliferation of Electric Vehicles (EVs) introduces significant spatio-temporal uncertainties into power grids, while Vehicle-to-Grid (V2G) technology offers critical flexibility through bidirectional power flow. However, integrating large-scale EVs into the Optimal Power Flow framework presents substantial challenges due to computational bottlenecks arising from solver complexity and coupled spatio-temporal constraints. Existing Reinforcement Learning (RL) methods often struggle to balance strict constraint satisfaction with scalability in highly dynamic EV fleet environments. To address these challenges, this paper proposes a Hierarchical Policy for Constrained Reinforcement Learning (HPC-RL) framework for spatially and temporally coupled V2G scheduling. The framework adopts a two-layer architecture: the upper level utilizes a RL algorithm based on the Generalized Reduced Gradient method to strictly enforce spatial grid-level hard constraints; the lower level implements a novel dynamic boundary strategy to compute real-time feasible charging power bounds for individual EVs, thereby ensuring the satisfaction of temporal charging demands. This integrated design not only enables the simultaneous handling of spatially and temporally coupled constraints during the RL optimization process but also significantly enhances generalization capabilities for large-scale fleets through hierarchical decoupling. Extensive experiments on IEEE 14, 30, and modified 141-bus systems demonstrate that HPC-RL outperforms Model Predictive Control and state-of-the-art safe RL baselines across all metrics. The proposed method achieves near-optimal scheduling strategies and drastically reduces online computation time in large-scale scenarios from hours to minutes, while maintaining a near-zero constraint violation rate and nearly 100\% charging demand satisfaction.
\end{abstract}

\keywords{Constrained Reinforcement Learning, Hierarchical Policy, V2G Scheduling, Spatial and Temporal Constraints.}

\section{Introduction}
The accelerating global adoption of EVs introduces significant spatio-temporal uncertainties and coupled constraints into power grids. Nevertheless, EVs simultaneously offer unprecedented flexibility through V2G technology \cite{hartvigsson2022large, vagropoulos2016investigation}. Effectively leveraging large-scale EV fleets as distributed schedulable resources is critical for enhancing grid stability, mitigating renewable energy intermittency, and improving overall system efficiency. OPF serves as the fundamental tool for ensuring the secure, economical, and efficient operation of modern power systems \cite{lin2019non}. Consequently, seamlessly integrating V2G scheduling within the OPF framework is essential for maintaining grid stability.

However, optimizing V2G coordination within the OPF framework encounters high computational costs due to its Mixed-Integer Nonlinear Programming formulation \cite{tian2024two}. The complexity arises from discrete EV connection states, charging behaviors, and the nonlinearity of power flow equations. Traditional solvers, such as SCIP, scale poorly with the number of integer variables, rendering them impractical for the real-time coordination of large-scale fleets \cite{basu2022complexity, li2019route, spitzer2019optimized}. While MPC offers dynamic optimization under uncertainty, it suffers from significant computational overhead due to its iterative nature \cite{shi2018model}.

To mitigate these bottlenecks, research has shifted towards learning-based alternatives. Although Deep Neural Networks (DNNs) can approximate OPF solutions \cite{pan2021deepopffirst}, they often struggle with sequential decision-making. Deep Reinforcement Learning (DRL) is particularly well-suited for such time-evolving systems \cite{wu2024physics}. Despite the promise of DRL, two fundamental limitations persist in current methodologies. First, rigorous constraint enforcement remains an open challenge. Most existing Safe RL methods rely on soft penalties or post-hoc projections that do not guarantee strict adherence to physical power balance equality constraints. Second, scalability is often limited. Specifically, most frameworks assume a fixed number of EVs, failing to accommodate the dynamic fluctuations inherent in real-world V2G systems. 

To address these challenges, we propose the HPC-RL framework. RL serves as the computational backbone to enable real-time decision-making. Crucially, we embed the solution of power flow equations directly within the policy generation process, guaranteeing strict satisfaction of power balance equality constraints. Second, to satisfy time-coupled EV charging demands, we introduce a novel dynamic boundary strategy that calculates real-time feasible charging power ranges for individual EVs. Finally, we adopt a hierarchical architecture that decouples grid-level scheduling from individual EV dispatch, ensuring scalability.

The main contributions of this article are summarized as follows:
\begin{itemize}[leftmargin=4pt, rightmargin=4pt]
\item \textbf{Unified Enforcement of Spatio-Temporal Constraints in RL.} We integrate the GRG method into the RL framework, combined with a novel dynamic boundary strategy. Our approach rigorously enforces spatio-temporal constraints encompassing both grid-level physical safety and individual EV charging demands.
\item \textbf{A Hierarchical Framework for Scalable V2G Scheduling.} We propose a hierarchical architecture that decouples grid-level power dispatch from individual EV charging allocation. This design effectively overcomes the curse of dimensionality caused by large-scale EV fleets.
\item \textbf{Scalable Real-Time Scheduling with Rigorous Constraint Satisfaction.} Extensive experiments validate the framework's effectiveness. Compared with MPC, HPC-RL provides a much faster feasible scheduling alternative with 100\% demand satisfaction, while incurring a moderate objective-cost gap. Compared with baseline CRL algorithms, HPC-RL achieves stronger constraint satisfaction and demand fulfillment across all tested systems.
\end{itemize}

\section{Related Work} \label{sec:related_work}
Existing literature relevant to this study can be broadly categorized into algorithmic advancements in CRL and the application of learning-based methods in V2G scheduling.

\subsection{Constrained Reinforcement Learning}
In safety-critical power systems, ensuring strict adherence to operational constraints is paramount. Current CRL methodologies primarily adopt three strategies: 1) \textbf{Penalty Methods} \cite{cao2023physics, ye2022learning, sadeghianpourhamami2019definition}, which treat constraints as soft objectives and require meticulous reward shaping without guaranteeing strict safety; 2) \textbf{Optimization Embedding} \cite{wang2019safe, yan2022hybrid}, which dynamically balances objectives via Lagrangian multipliers but struggles with tuning dual variables for complex non-convex constraints; and 3) \textbf{Projection-based Strategies} \cite{kou2020safe, gao2022model, xia2022safe}, which correct infeasible actions post-hoc but often fail to strictly satisfy the non-linear power balance equality constraints inherent in grids.

\subsection{Learning-based Vehicle-to-Grid Scheduling}
Early V2G scheduling approaches employed DNNs to approximate OPF solutions \cite{pan2021deepopffirst, pan2020deepopfscdc, donti2020dc3}. However, these methods struggle with the sequential decision-making required for dynamic battery management \cite{li2021electric}. Consequently, RL has become the preferred choice for handling time-evolving V2G systems \cite{wu2024physics}. Despite this progress, most existing RL V2G frameworks are restricted to fixed-dimension state spaces, failing to accommodate the dynamic entry and exit of EVs \cite{chen2022deep, li2019constrained, wu2023network}. While some hierarchical designs attempt to manage variable EV numbers \cite{ting2024uncertainty, zhang2023ev}, they typically overlook the rigorous enforcement of hard physical constraints. 

As summarized in Table \ref{tab:comparison}, existing methodologies face a fundamental trade-off between real-time capability and the strict satisfaction of coupled spatio-temporal constraints. 
Specifically, standard CRL methods struggle to enforce the \textit{temporally coupled} charging demands of EVs as hard constraints, while projection-based strategies often violate critical grid equality constraints despite satisfying inequality limits. Furthermore, the inability to scale to variable EV fleet sizes limits the practical deployment of many RL solutions.

\begin{table*}[]
\centering
\footnotesize
\caption{Comparison among existing V2G schedule algorithms across various dimensions.}
\label{tab:comparison}
\scalebox{0.725}{
\begin{tabular}{c|ccccc}
\toprule
Method & \begin{tabular}[c]{@{}c@{}}Methodology \\ Category\end{tabular} & \begin{tabular}[c]{@{}c@{}}Real-time\\ requirement\end{tabular} & \begin{tabular}[c]{@{}c@{}}Network Security\\ (Spatial constraints)\end{tabular} & \begin{tabular}[c]{@{}c@{}}EV Task Guarantees\\ (Temporal constraints)\end{tabular} & Hard constraints \\ \hline
MPC \cite{li2022distributionally} & Optimization & \XSolidBrush & \Checkmark  & \Checkmark & \Checkmark \\
DC+trust region \cite{tian2024two} & Optimization & \XSolidBrush & \Checkmark  & \Checkmark & \Checkmark \\
DC3 \cite{donti2020dc3} & Learning & \Checkmark & \Checkmark & \Checkmark  & \Checkmark \\
GGAT+DAE \cite{cao2023physics} & Learning & \Checkmark & \Checkmark & \Checkmark  & \XSolidBrush \\
Penalty-based RL \cite{ye2022learning, sadeghianpourhamami2019definition} & Learning & \Checkmark & \XSolidBrush  & \Checkmark & \XSolidBrush \\
Opt-embedded RL \cite{wang2019safe, yan2022hybrid} & Learning & \Checkmark & \Checkmark  & \XSolidBrush & \Checkmark \\
Pro-based RL \cite{kou2020safe, gao2022model} & Learning & \Checkmark & \Checkmark  & \XSolidBrush & \Checkmark \\
RPO \cite{ding2023reduced} & Learning & \Checkmark & \Checkmark & \XSolidBrush & \Checkmark \\
RDDPG with DEB \cite{li2023constrained} & Learning & \Checkmark & \XSolidBrush  & \Checkmark & \XSolidBrush \\
SAC+NMT \cite{li2023constrained} & Learning & \Checkmark & \Checkmark  & \Checkmark & \XSolidBrush \\
CRL \cite{wu2023network} & Learning & \Checkmark & \Checkmark  & \Checkmark & \XSolidBrush \\
\hline
HPC-RL (Ours) & Learning & \Checkmark & \Checkmark  & \Checkmark & \Checkmark \\
\bottomrule
\end{tabular}
}
\end{table*}

\begin{table*}[]
\centering
\footnotesize
\caption{Comparison among existing V2G schedule algorithms across various dimensions.}
\label{tab:comparison}
\scalebox{0.725}{
\begin{tabular}{c|ccccc}
\toprule
Method & \begin{tabular}[c]{@{}c@{}}Methodology \\ Category\end{tabular} & \begin{tabular}[c]{@{}c@{}}Real-time\\ requirement\end{tabular} & \begin{tabular}[c]{@{}c@{}}Network Security\\ (Spatial constraints)\end{tabular} & \begin{tabular}[c]{@{}c@{}}EV Task Guarantees\\ (Temporal constraints)\end{tabular} & Hard constraints \\ \hline
MPC \cite{li2022distributionally} & Optimization & \XSolidBrush & \Checkmark  & \Checkmark & \Checkmark \\
Penalty-based RL \cite{ye2022learning, sadeghianpourhamami2019definition} & Learning & \Checkmark & \XSolidBrush  & \Checkmark & \XSolidBrush \\
Opt-embedded RL \cite{wang2019safe, yan2022hybrid} & Learning & \Checkmark & \Checkmark  & \XSolidBrush & \Checkmark \\
Pro-based RL \cite{kou2020safe, gao2022model} & Learning & \Checkmark & \Checkmark  & \XSolidBrush & \Checkmark \\
\hline
HPC-RL (Ours) & Learning & \Checkmark & \Checkmark  & \Checkmark & \Checkmark \\
\bottomrule
\end{tabular}
}
\end{table*}

These deficiencies highlight a critical gap: existing methodologies face a fundamental trade-off between real-time scalability for dynamic EV fleets and the strict satisfaction of coupled spatio-temporal constraints. The proposed HPC-RL framework addresses this by integrating a full AC-OPF formulation with a hierarchical architecture.

\section{Problem Formulation}
\label{sec:problem_formulation}

\subsection{OPF-EV Formulation}
V2G scheduling must satisfy AC power-flow feasibility at every dispatch step while ensuring that each EV receives sufficient energy before departure. This motivates a constrained sequential OPF-EV formulation that minimizes the combined grid-side generation cost and EV charging cost under both network security constraints and EV service guarantees:
\begin{align}
    \underset{p_g,q_g,v,p_{ch}}{\text{min }}  &\sum_{t=0}^T p_g^\top(t) A p_g(t) + b^\top p_g(t) + c(t) P_{ch}(t) ,\label{eq:obj} \\
    \text{subject to } &(p_g(t)-p_d(t)-P_{ch}(t)) + \mathrm{i}(q_g(t)-q_d(t)) \notag \\
    & = \text{diag}(v(t)) \, Y^* \, v(t)^*, \label{eq:power_balance} \\
    & {\underline{p}}_g \le p_g \le {\overline{p}}_g ,\label{eq:pg_bound} \\
    & {\underline{q}}_g \le q_g \le {\overline{q}}_g ,\label{eq:qg_bound} \\
    & {\underline{v}} \le |v(t)| \le {\overline{v}} ,\label{eq:v_bound} \\
    & {\underline{P}}_{ch} \le P_{ch} \le {\overline{P}}_{ch} ,\label{eq:pb_bound} \\
    & P_{ch}(t) = \sum_i P_{i, ch}(t) ,\\
    & SOC_i(t) = SOC_i(t - 1) + \eta P_{i, ch}(t) \Delta t ,\label{evstart}\\
    & SOC_i^{min} \le SOC_i(t) \le SOC_i^{max}, \\
    & P_i^{min} \le P_{i, ch}(t) \le P_i^{max} ,\\
    & SOC_i^l \ge SOC_i^d .\label{evend}
\end{align}
Here, $p_g, q_g \in \mathbb{R}^n$ represent the active and reactive power generation at each bus, and $v \in \mathbb{C}^n$ represents the bus voltages in the grid. $Y \in \mathbb{C}^{n \times n}$ is the admittance matrix, while $p_d, q_d \in \mathbb{R}^n$ denote the active and reactive power demands. Not all buses are generator buses; at non-generator buses, $p_g$ and $q_g$ are zero. One generator bus serves as the reference (slack) bus, maintaining a fixed voltage angle $\angle v$.
$P_{ch} \in \mathbb{R}^n$ represents the charging power of EVs, and $c(t)$ is the cost or revenue at time step $t$. Equation (\ref{eq:power_balance}) enforces power balance. Equations (\ref{eq:pg_bound}) and (\ref{eq:qg_bound}) are generator capacity constraints. Equation (\ref{eq:v_bound}) ensures voltage security. Equation (\ref{eq:pb_bound}) restricts EV charging rates. Equations (\ref{evstart})-(\ref{evend}) define the SOC dynamics and ensure EV energy demand is met.

\subsection{Markov Decision Process}
We formulate the OPF-EV scheduling as a finite-horizon Markov Decision Process (MDP) defined by the tuple $\{ \mathcal{S, A, P, R, \gamma} \}$:

\textbf{State ($\mathcal{S}$) \& Action ($\mathcal{A}$):} $\mathcal{S} = \{ p_d, q_d, SOC, r_t, c(t) \}$ encodes the current grid demands and active EV information, including battery states, remaining connection times ($r_t$), target charging requirements, charging-rate limits, and real-time electricity prices. $\mathcal{A} = \{ p_g, q_g, v, \theta, P_{i,ch} \}$ encompasses both grid variables (to satisfy power flow equations) and non-negative EV charging rates.

\textbf{Transition ($\mathcal{P}$) \& Reward ($\mathcal{R}$):} Since grid demands are exogenous, state transitions primarily follow deterministic battery dynamics: $SOC_i^{t+1} = SOC_i^t + \eta P_{i,ch}^t \Delta t$. The reward $\mathcal{R}$ captures the dual economic objectives of minimizing the grid's quadratic generation cost and the EV charging station's electricity procurement cost:
\begin{equation}
    \mathcal{R} = -w_{\text{grid}} \sum_{t=0}^{T} ( p_g^\top A p_g + b^\top p_g ) - w_{\text{ev}} \sum_{t=0}^{T} c(t) P_{\text{ch}}(t). \label{eq:total_reward}
\end{equation}

Solving this MDP presents three fundamental hurdles: 1) \textit{Action Feasibility:} Conventional soft penalties fail to strictly enforce the complex non-linear power balance equality constraints; 2) \textit{Dynamic Dimensionality:} The continuous arrival and departure of EVs create a varying action space size, degrading standard RL network stability \cite{klissarov2025discovering}; 3) \textit{Temporally Coupled Constraints:} EV charging demands are only verifiable at departure, preventing standard step-wise RL from enforcing them during ongoing scheduling.

\section{The Proposed Model and Solution Approaches}

\begin{figure*}[h]
    \centering
    \includegraphics[width=0.8\linewidth]{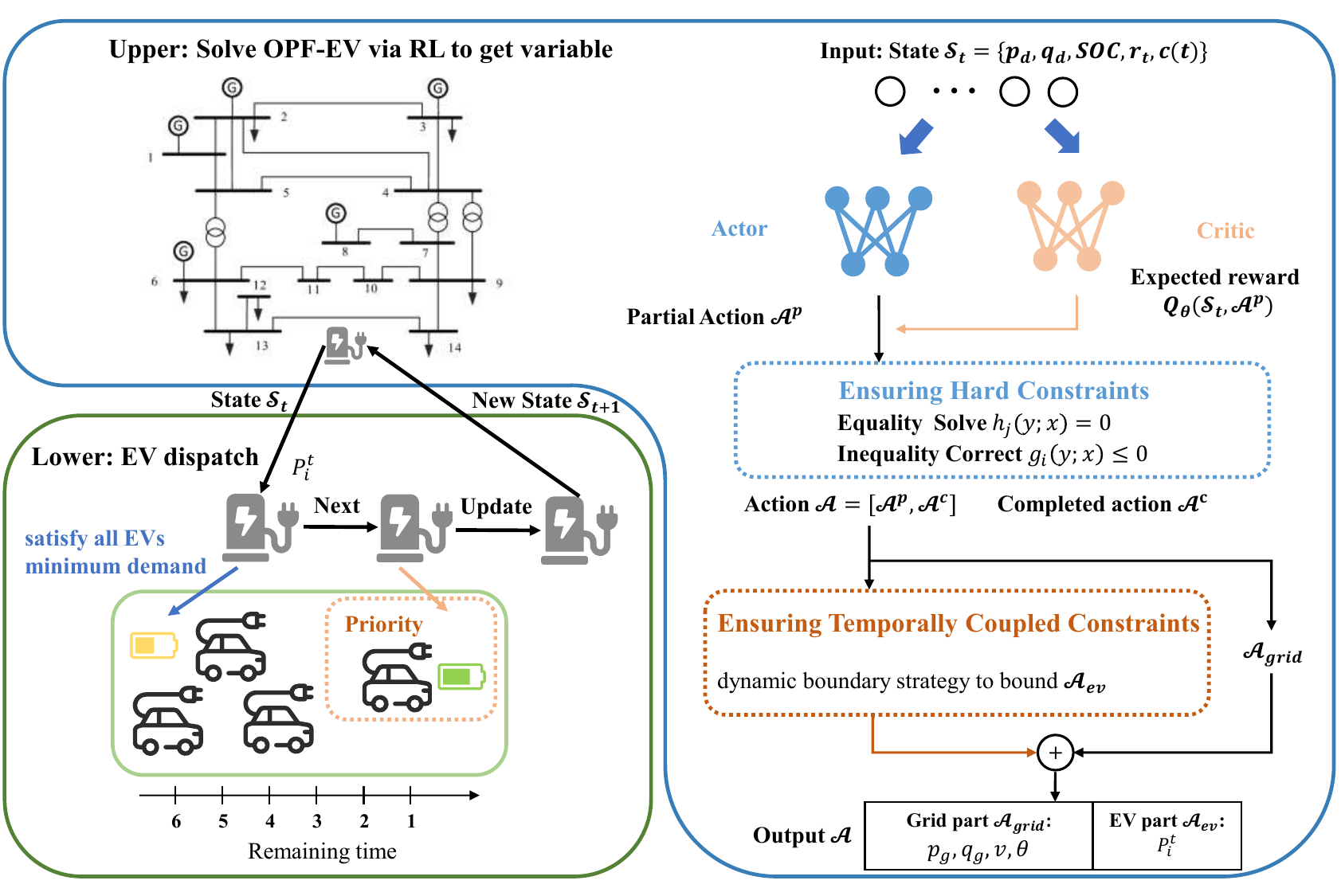}
    \caption{The HPC-RL pipeline consists of an upper layer solving the OPF-EV using RL with GRG constraints, and a lower layer managing EV charging allocations via dynamic boundaries and priority.}
    \label{fig:pipeline}
\end{figure*}

To address the OPF-EV problem, we propose a hierarchical policy framework utilizing a boundary strategy to decouple scheduling tasks across time and space which shown in Fig.\ref{fig:pipeline}. The upper level employs a reinforcement learning network, integrated with equation-solving and inequality-correcting layers, to optimize grid parameters while strictly satisfying spatial constraints. Meanwhile, the lower level allocates power to guarantee minimum EV charging requirements and prioritize remaining capacity, ensuring all temporally coupled constraints are met.

\subsection{Hierarchical Decomposition via GRG-SAC}
The hierarchical framework is structured according to power grid topology, aggregating EVs arriving at the same distribution node into electric vehicle charging stations (EVCSs). This configuration mirrors real-world operations where distribution nodes service clusters of EVCSs. The original optimization problem is decoupled into two layers: the upper level coordinates grid-level power generation variables ($\mathbf{p_g, q_g, v, \theta}$) with the aggregate charging demands of the EVCSs ($\mathbf{P_{ch}}$), while the lower level executes power allocation per EV within each station. Through this hierarchical decoupling based on physical topology, we simultaneously address the curse of dimensionality while preserving the physical fidelity of the grid model.
Numerical solvers are computationally prohibitive for real-time control, whereas standard RL struggles to strictly enforce power flow equations. We adopt a physics-integrated GRG-SAC framework combining GRG with SAC, which analytically embeds power flow equality constraints into the action generation process. Rather than treating the optimization as a generic constrained problem, we analytically embed the power flow equations directly into the action generation process. We decompose the decision variables into two specific sets: \textbf{basic actions} ($a_B$) and \textbf{non-basic variables} ($a_N$). The RL policy $\mu_\theta(s)$ is restricted to outputting only the independent basic actions $a_B$, specifically the active power generation $\mathbf{p_g}$ at generator buses and voltage magnitudes $\mathbf{|v|}$ at non-load buses. Subsequently, the non-basic variables $a_N$, comprising voltage magnitudes at load buses and phase angles $\theta$, are analytically derived by solving the power balance equality constraints $h(a_B, a_N) = 0$ via Newton's method. This equation-solving procedure effectively establishes an implicit mapping $a_N = \phi(a_B)$, where the dependent variables are deterministically constrained by the physics of the grid.

To guarantee the validity of this decomposition during training, we utilize the Implicit Function Theorem to propagate gradients through this implicit mapping. The gradient flow from the non-basic variables back to the policy output is derived as:
\begin{equation}
    \frac{\partial a_N}{\partial a_B} = - \left( \frac{\partial h}{\partial a_N} \right)^{-1} \frac{\partial h}{\partial a_B} = -(J_{N})^{-1} J_{B},
\end{equation}
where $J_N$ and $J_B$ are partitions of the power flow Jacobian matrix. To handle inequality constraints $g(a) \leq 0$, we employ a projection mechanism that maps the action generated by the policy back to the feasible set whenever an inequality violation occurs. The physical feasibility of this correction is rigorously guaranteed by the following proposition:

\begin{proposition}[Feasibility Preservation in Tangent Space]
The implicit adjustment $\Delta a_N$ confines the total update vector $\Delta a = [\Delta a_B^T, \Delta a_N^T]^T$ to the tangent space of the equality constraints, strictly preserving the power balance manifold.
\end{proposition}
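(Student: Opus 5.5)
The argument is a first-order computation with the Implicit Function Theorem derivative given just above. Write the equality constraints as $h(a_B,a_N)=0$ and assume, as Newton's method already does, that $J_N=\partial h/\partial a_N$ is nonsingular at the current feasible point $(a_B,a_N)$. The tangent space of the power balance manifold $\mathcal{M}=\{a: h(a)=0\}$ at that point is then $T\mathcal{M}=\ker [J_B\; J_N]$. Take any correction $\Delta a_B$ of the basic actions, for example one produced by the projection step for $g(a)\le 0$. Define the implicit adjustment by $\Delta a_N = -(J_N)^{-1}J_B\,\Delta a_B$, which is exactly the Jacobian $\partial a_N/\partial a_B$ applied to $\Delta a_B$.

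The first step is to verify tangency directly:
\begin{equation*}
[J_B\; J_N]\begin{bmatrix}\Delta a_B\\ \Delta a_N\end{bmatrix} = J_B\Delta a_B - J_N J_N^{-1}J_B\Delta a_B = 0 .
\end{equation*}
Hence $\Delta a\in T\mathcal{M}$. This is the core linear-algebra claim.

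The second step upgrades tangency to preservation of the manifold. By Taylor expansion, $h(a+\Delta a)=h(a)+[J_B\;J_N]\Delta a+O(\|\Delta a\|^2)=O(\|\Delta a\|^2)$. So the update stays on $\mathcal{M}$ to first order. For exact preservation, I would appeal to the Implicit Function Theorem: there is a neighborhood $U$ of $a_B$ and a $C^1$ map $\phi$ with $h(a_B',\phi(a_B'))=0$ for all $a_B'\in U$. Because the non-basic variables are recomputed as $a_N'=\phi(a_B+\Delta a_B)$ by Newton's method rather than by the linear step, the corrected action lies exactly on $\mathcal{M}$. Its linearization is precisely $\Delta a_N$, since $D\phi=-J_N^{-1}J_B$. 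In this sense the linear step $\Delta a_N$ gives the tangent direction, and the Newton re-solve retracts the update exactly back onto $\mathcal{M}$.

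The main obstacle is that the statement is inherently first-order and local. Strict preservation requires two conditions. First, $J_N$ must stay invertible along the path; this is the standard regularity of the power flow Jacobian away from voltage collapse. Second, $a_B+\Delta a_B$ must remain within the IFT neighborhood, which is arranged by a sufficiently small step or line search. I would state both as standing assumptions, namely nonsingularity of the reduced power flow Jacobian and a bounded step size. Under them, exact satisfaction of $h=0$ follows from convergence of Newton's method to $\phi(a_B+\Delta a_B)$.
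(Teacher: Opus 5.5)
Your proposal is correct and follows essentially the same route as the paper. Both arguments use the same two steps:
\begin{itemize}
\item the tangency identity $J_B\Delta a_B + J_N\Delta a_N = 0$ under nonsingular $J_N$;
\item the Implicit Function Theorem mapping $a_N=\phi(a_B)$, with Newton iterations restoring nonlinear feasibility.
\end{itemize}
Your extra details (the $O(\|\Delta a\|^2)$ Taylor remainder, $D\phi=-J_N^{-1}J_B$, and the stated assumptions on invertibility of $J_N$ and step size) make precise what the paper leaves as ``up to numerical tolerance.''
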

\begin{proof}
Let $h(a)=0$ denote the power-flow equality constraints and $a=[a_B^T,a_N^T]^T$. Denote $J_B=\partial h/\partial a_B$ and $J_N=\partial h/\partial a_N$. If $J_N$ is nonsingular, the GRG adjustment induced by a perturbation $\Delta a_B$ is
\begin{equation}
    \Delta a_N = -J_N^{-1}J_B\Delta a_B .
\end{equation}
Therefore,
\begin{equation}
    \nabla h(a)\Delta a = J_B\Delta a_B + J_N\Delta a_N = 0,
\end{equation}
which shows that the update direction lies in the tangent space of the equality-feasible manifold. By the implicit function theorem, a local mapping $a_N=\phi(a_B)$ exists such that $h(a_B,\phi(a_B))=0$. In implementation, Newton iterations are used to recover nonlinear equality feasibility up to numerical tolerance.
\end{proof}
This establishes first-order tangent-space feasibility for the equality constraints. The subsequent Newton correction recovers nonlinear AC power-flow feasibility up to the solver tolerance, ensuring that the executed action remains close to the equality-feasible manifold.

The GRG-SAC hybrid struggles with time-coupled EV charging requirements due to discrete state transitions and multi-step temporal dependencies that contradict GRG's single-step optimization. Consequently, the core challenge lies in converting long-term charging obligations into real-time, actionable boundaries to guide hierarchical scheduling. To bridge this gap, the following subsection introduces a novel boundary strategy.

\subsection{Dynamic Boundary Strategy for Temporally Decoupling EV Constraints}
In this paper, we introduce a novel dynamic boundary strategy that translates long-term charging obligations into real-time operational boundaries. This approach bridges the temporal decoupling between instantaneous grid optimization and multi-step EV charging objectives while accommodating discrete state transitions induced by EV arrival and departure events.

The core innovation resides in dynamically computing feasible power boundaries that guarantee each EV $i$ satisfies its demand. For individual EV $i$ at time $t$, we derive these boundaries through temporal-aware projections:

\begin{equation}
    \label{eq:boundary}
    \scalebox{0.95}{$\displaystyle
    \begin{split}
        &P_{i,ch}^{\max}(t) = \min \{P_{i,ch}^{\text{phys}}, SOC^{\max} - SOC_i(t) \}, \\
        &P_{i,ch}^{\min}(t) = \max \{ 0, (SOC^d_i - SOC_i(t)) - d_{\text{re}}(t) \cdot P_{i,ch}^{\text{phys}} \}, 
    \end{split}
    $}
\end{equation}
where $d_{\text{re}}(t) = t_{\text{dep}} - t - 1$ represents the remaining decision intervals excluding the current time step. 

The formulation in \eqref{eq:boundary} comprehensively characterizes the dynamic determination of power constraint boundaries. The upper bound $P_{i,ch}^{\max}$ enforces dual limitations from maximum charging rate $P_{i,ch}^{\text{phys}}$ and residual energy capacity $(SOC^{\max} - SOC_i(t))$, ensuring adherence to both hardware limitations and battery operational boundaries. The lower bound $P_{i,ch}^{\min}$ selects the maximum between zero (minimum charging rate) and the demand-driven power requirement, which guarantees charging completion within the remaining time window.

The dynamic boundary strategy decouples long-term obligations from real-time optimization, mathematically guaranteeing that all EVs meet their charging requirements while maintaining operational feasibility. However, relying solely on upper-level optimization introduces operational flaws, as it cannot guarantee individual minimum SOC limits or prioritize urgent charging needs. To address these limitations, a lower-level allocation mechanism is necessary to manage individual EVs. This lower-level allocation first ensures that all EVs receive their minimum required power, enforced through boundary constraints. Finally, the remaining station capacity is dynamically distributed to EVs with the highest charging urgency, determined by their SOC deficits and imminent departure times.

For individual EV energy management, we compute the required charge quantities to determine aggregate EVCS states. The RL module outputs normalized charging solutions constrained within feasible operational spaces using the hyperbolic tangent function. Furthermore, anticipating temporal variations in charging demand, we design a specialized demand embedding layer that encodes future requirements into the state space as follows:
\begin{equation}
    s_k^{\text{demand}} = \sum_{\tau=k}^{T} d_\tau \quad \forall k \in \{1,2,\dots,T\},
\end{equation}
where $\mathbf{d} = [d_1, d_2, \dots, d_T]$ is a $T$-period demand vector. For instance, a 6-hour demand profile $[0, 0.2, 0.2, 0.2, 0, 0.2]$ generates the encoded state $[0.8, 0.8, 0.6, 0.4, \allowbreak 0.4, 0.2]$. The advantages of this representation will be demonstrated in the subsequent experimental section.

\subsection{HPC-RL Framework Summary}
In summary, to holistically address the aforementioned challenges, we propose the HPC-RL framework. This framework is architected around two core design principles. A hierarchical decomposition that decouples grid-level optimization from EV-level allocation to tackle scalability and dynamic dimensionality and a hybrid constraint handling strategy to guarantee strict feasibility. The upper-level agent embeds GRG's constraint satisfaction within SAC's exploratory policy optimization, ensuring all grid physical constraints while generating actions. Concurrently, our dynamic boundary strategy transforms long-term, temporally coupled EV charging obligations into instantaneous feasible charging limits. This design fundamentally resolves the challenges by guaranteeing physically feasible actions and converting discontinuous EV connection events into continuously differentiable boundary conditions for stable learning.

\section{Experiments}
\subsection{Experiments Settings}
\subsubsection{Data set and parameters}
We conducted experiments on the Case 14 system, Case 30 system and a modified Case 141 system (which selected 52 nodes as new power generation nodes). For the Case 14 system, three EVCSs are deployed at buses 2, 6, and 8; for the Case 30 system, three EVCSs are deployed at buses 1, 13, and 27; and for the modified Case 141 system, a single EVCS is implemented at the newly added generation bus 52, with each EVCS maintaining a constant connection capacity of 10 EVs per discrete time step $T$. 

The scheduling horizon is $T=24$ hours, and the time granularity is $\Delta t=1$ hour. For each EVCS, a vehicle will arrive in each time step $T$, and the parking time of each vehicle is fixed to 8 hours. For each EV, the charging efficiency is set to $\eta=0.98$, the maximum charging rate is $P_i^{max}=0.2$, the arrival SOC is $SOC_i=0.2$, and the target SOC is $SOC_i^d=0.8$. PyTorch is used as the primary simulation environment and MATLAB's CVX solver is used for numerical baselines. The fixed EV profiles are used only as controlled evaluation trajectories to ensure fair and reproducible comparison across methods. HPC-RL itself operates online and does not require future EV arrivals, dwell times, or demands; at each step, it only uses the current grid state and the active EV information.

\subsubsection{Benchmarks}
We compare HPC-RL against traditional numerical optimization using MPC \cite{li2022distributionally}, as well as several CRL baselines: Constrained Policy Optimization (CPO) \cite{achiam2017constrained}, Conservative Update Policy (CUP) \cite{yang2022cup}, DDPGLA \cite{silver2014deterministic}, and SACLA \cite{haarnoja2018soft}.

\subsubsection{Metrics}
\begin{itemize}[leftmargin=4pt, rightmargin=4pt]
    \item Objective function: Equations (\ref{eq:obj}) represents the cost of power generation for the grid and the cost of charging EVs.
    \item Constraint Violation: Quantifies the maximum deviation from both equality and inequality constraints. Higher values indicate greater infeasibility.
    \item Demand Satisfaction Rate: Evaluates the percentage of EVs that successfully achieve their demand before departure from the EVCS.
\end{itemize}

\subsection{Results}

\begin{figure}[h]
    \centering
    \includegraphics[width=1.0\linewidth]{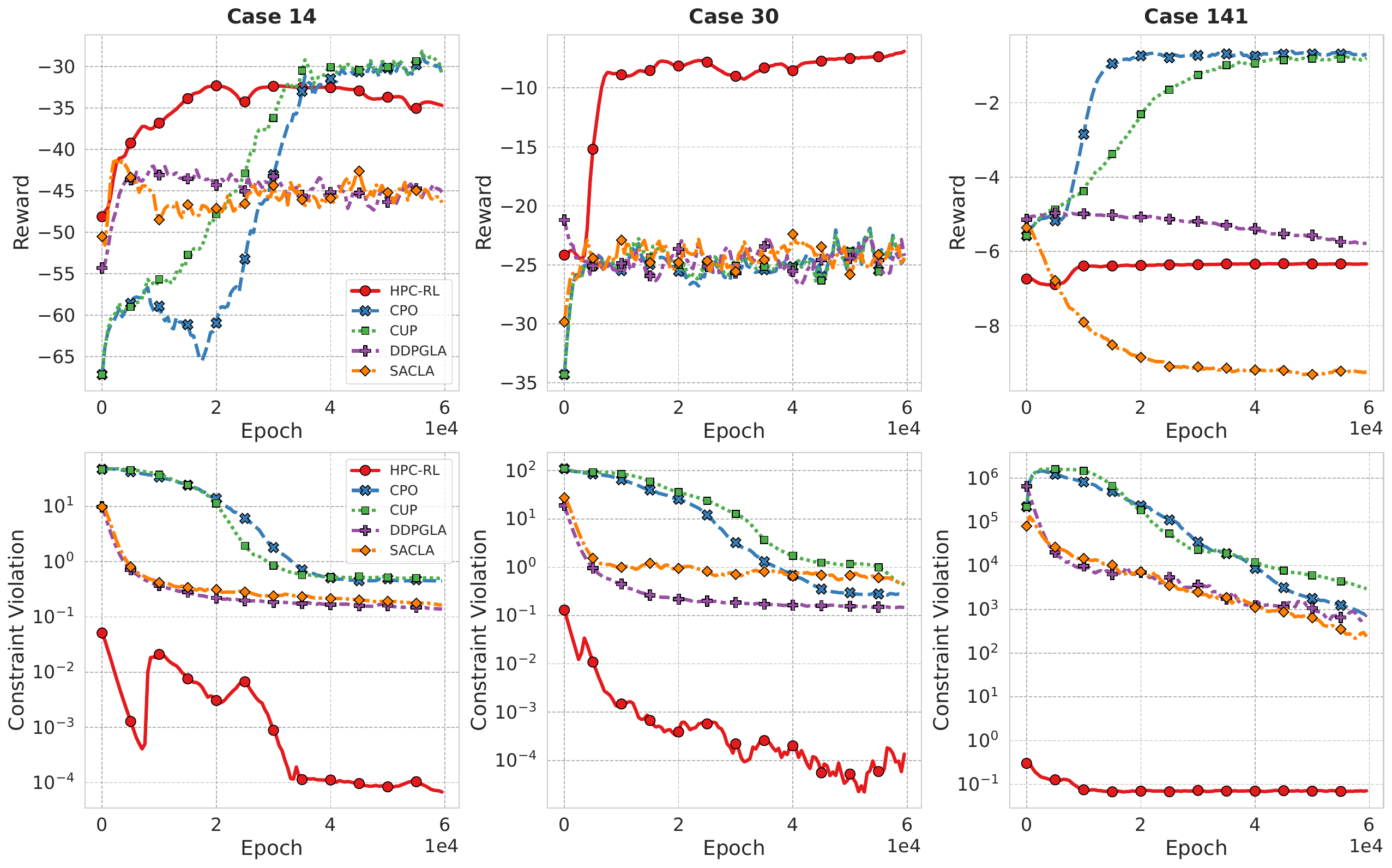}
    \caption{This figure shows the learning curves of five different safe reinforcement learning methods (HPC-RL, CPO, CUP, DDPGLA, and SACLA) in three different scenarios (Case 14, Case 30, and Modify Case 141). The x-axis represents the number of training epochs. The y-axis represents the episodic reward (first row), and the maximum instantaneous constraint violation (second row).} 
    \label{fig:case}
\end{figure}

\textbf{Comparative analysis with Safe-RL algorithms.} Figure \ref{fig:case} illustrates the learning curves across the three scenarios. Across all cases, baseline algorithms exhibit critical trade-offs: CPO and CUP often achieve high episodic rewards but suffer from severe constraint violations and poor demand satisfaction, rendering them physically unfeasible for grid operations. Conversely, DDPGLA and SACLA maintain better safety metrics but fail to achieve optimal reward convergence. Our proposed HPC-RL consistently resolves this trade-off among RL baselines by achieving stable rewards, near-zero constraint violations, and 100\% demand satisfaction across all tested network scales.

\textbf{Computational Efficiency and Optimality Preservation.} As demonstrated in Table~\ref{tab1}, MPC achieves the lowest objective values among feasible methods, while HPC-RL provides a much faster feasible alternative with objective gaps of about 10.2\%, 13.1\%, and 7.2\% for Cases 14, 30, and 141, respectively, and reduces runtime by factors of $52\times$, $67\times$, and $319\times$. For large-scale networks where the computational burden of MPC limits online scheduling, HPC-RL maintains low online inference time after training, supporting its deployment as a real-time feasible scheduling method.

\begin{table*}[h]
\centering
\caption{Average evaluation performance across three benchmarks comparing numerical solvers with Ours and safe RL baselines.}
\scalebox{0.8}{
\begin{tabular}{l|c|lccccc}
\hline
Case & Category   & Method  & Obj.    & Runtime (s)  & Ineq (viol) & Eq (viol) & Demand Satisfaction (\%)\\ 
\hline
\multirow{6}{*}{14} 
& \multirow{1}{*}{Traditional}  
    & MPC     & 34.2143 & 308.550  & 0.00       & 0.0000     &100.0   \\
  \cline{2-8} 
& \multirow{5}{*}{\centering RL-based}  
    & CPO     & 20.8081 & 0.005  & 0.00       & 0.7710   &68.6 \\
&   & CUP     & 23.7842 & 0.005  & 0.00       & 0.7545   &66.2  \\ 
&   & DDPGLA  & 48.0469 & 0.007  & 0.00       & 0.2322   &72.3 \\ 
&   & SACLA   & 44.7153 & 0.015  & 0.00       & 0.2321   &73.9  \\ 
&   & Ours    & 37.6995 & 1.016  & 0.00       & 0.0020   &100.0   \\ 
\hline 
\multirow{6}{*}{30} 
& \multirow{1}{*}{Traditional}  
    & MPC     & 7.1054  & 473.470  & 0.00       & 0.0000     &100.0   \\
  \cline{2-8} 
& \multirow{5}{*}{\centering RL-based}  
    & CPO     & 25.1416 & 0.005  & 0.00       & 0.3873   &78.2  \\
&   & CUP     & 25.1419 & 0.005  & 0.00       & 0.3873   &77.7  \\ 
&   & DDPGLA  & 23.3126 & 0.005  & 0.00       & 0.1681   &71.9  \\ 
&   & SACLA   & 22.7059 & 0.016  & 0.00       & 0.7478   &70.8  \\ 
&   & Ours    & 8.0428  & 0.872  & 0.00       & 0.0002   &100.0   \\ 
\hline
\multirow{6}{*}{141}
& \multirow{1}{*}{Traditional}  
    & MPC     & 5.1221  & 5172.700  & 0.00       & 0.0000     &100.0   \\
 \cline{2-8} 
& \multirow{5}{*}{\centering RL-based}  
    & CPO     & 0.7177  & 0.005  & 0.00       & 164.25   &75.2  \\
&   & CUP     & 0.8858  & 0.005  & 0.00       & 1706.1   &78.9  \\ 
&   & DDPGLA  & 5.7690  & 0.005  & 0.00       & 661.73   &70.2  \\ 
&   & SACLA   & 9.0711  & 0.014  & 0.00       & 628.50   &92.2  \\ 
&   & Ours    & 5.4945  & 4.967  & 0.00       & 0.4138   &100.0   \\ 
\hline
\end{tabular}
}
\label{tab1}
\end{table*}

\textbf{Learned Price-Responsive Behavior.} Figure \ref{fig:price} demonstrates that HPC-RL effectively learns to arbitrage dynamic electricity prices. The system aggressively charges EVs during low-price periods and minimizes charging during price spikes. Crucially, the EVCS still strictly enforces minimum charging boundaries during high-price periods to ensure all vehicle energy demands are fulfilled prior to departure.

\begin{figure}[h]
    \centering
    \includegraphics[width=0.7\linewidth]{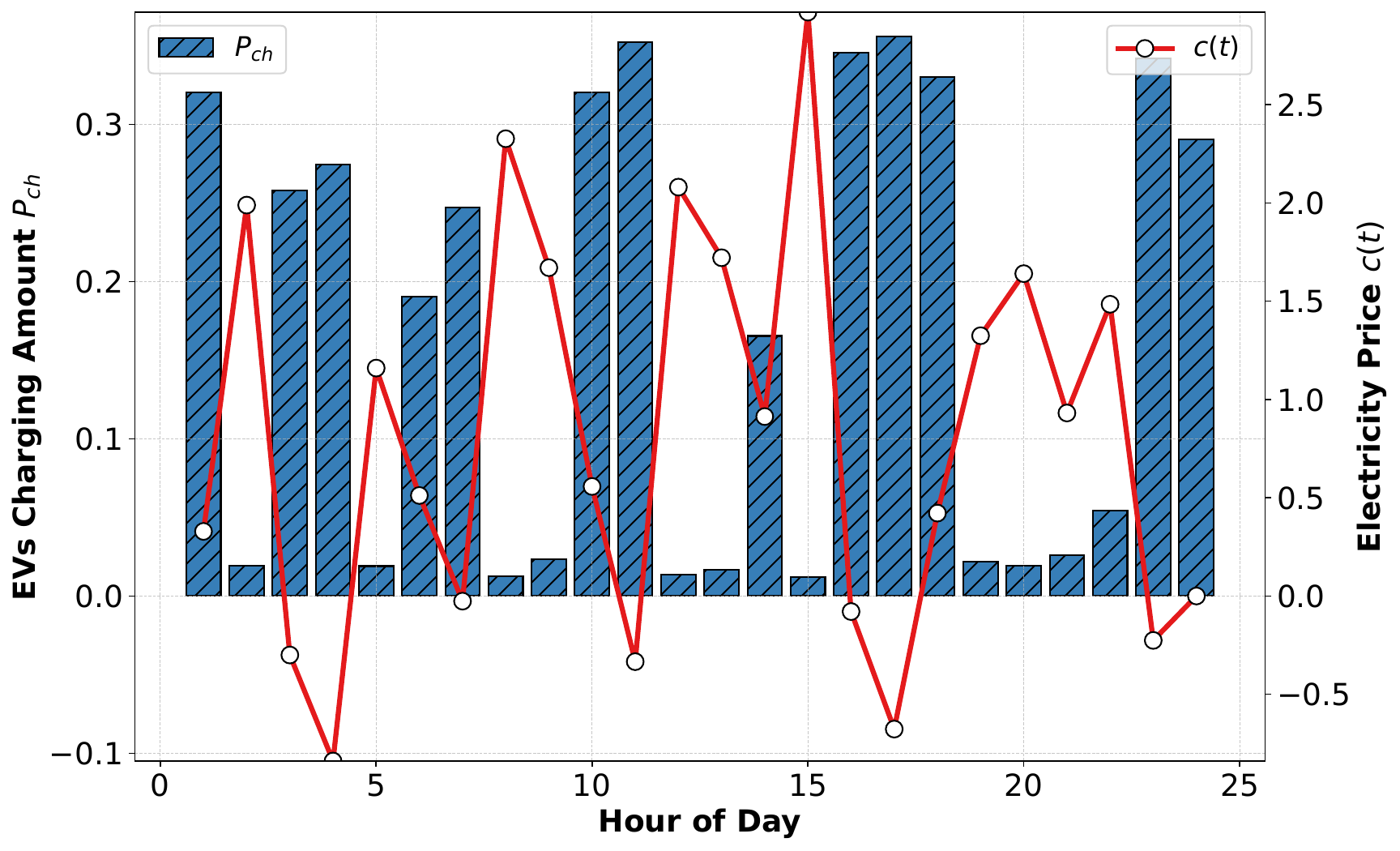}
    \caption{Relationship between electricity price and charging amount.}
    \label{fig:price}
\end{figure}

\subsection{Scalability}
In this section, we present the experimental results of an experiment designed to evaluate the scalability of our architecture, focusing on Case 14. We tested the system's performance with varying numbers of electric vehicles connected to a single charging station, considering scenarios with 1, 10, and 50 electric vehicles, to assess how the system responds to different levels of demand. The recorded execution times for each scenario are shown in Table \ref{Sca}. These results demonstrate that although the execution time increases with the number of electric vehicles, our architecture is still able to effectively manage charging demands, while the generated solutions are very close to those obtained by the numerical solver. This highlights its strong scalability and efficiency in adapting to different numbers of connected electric vehicles, thus confirming its suitability for practical applications in smart grid environments. The EV numbers in Table~\ref{Sca} represent instantaneous active-set sizes used to evaluate per-step inference cost. In online scheduling, EVs may enter and leave over time, and HPC-RL constructs the lower-level allocation according to the current active EV set at each decision step.

\begin{table}[h]
\centering
\caption{Comparison of HPC-RL and MPC with varying numbers of EVs.}
\scalebox{0.9}{
\begin{tabular}{ll|ll}
\hline
\multicolumn{2}{l|}{IEEE Case 14}       & MPC     & Ours    \\ \hline
\multicolumn{1}{l|}{}      & Obj.       & 19.9023 & 20.3576 \\
\multicolumn{1}{l|}{1 EV}  & Runtime(s) & 100.79  & 0.9991  \\
\multicolumn{1}{l|}{}      & viol       & 0.00    & 0.00    \\ \hline
\multicolumn{1}{l|}{}      & Obj.       & 23.1174 & 23.8678 \\
\multicolumn{1}{l|}{10 EVs}& Runtime(s) & 224.52  & 1.0033  \\
\multicolumn{1}{l|}{}      & viol       & 0.00    & 0.00     \\ \hline
\multicolumn{1}{l|}{}      & Obj.       & 39.2993 & 39.8084 \\
\multicolumn{1}{l|}{50 EVs}& Runtime(s) & 1020.28 & 1.9313  \\
\multicolumn{1}{l|}{}      & viol       & 0.00    & 0.00    \\ \hline
\end{tabular}
}
\label{Sca}
\end{table}

\subsection{Ablation Study}
\textbf{Impact of Demand Information Embedding.} Figure \ref{fig:with demand} highlights that incorporating future EV charging demand information into the state representation significantly improves learning efficiency. The demand-aware model not only converges to higher rewards but also exhibits reduced volatility (narrower standard deviation bands) and faster suppression of constraint violations compared to the baseline.

\begin{figure}[h]
    \centering
    \includegraphics[width=0.8\linewidth]{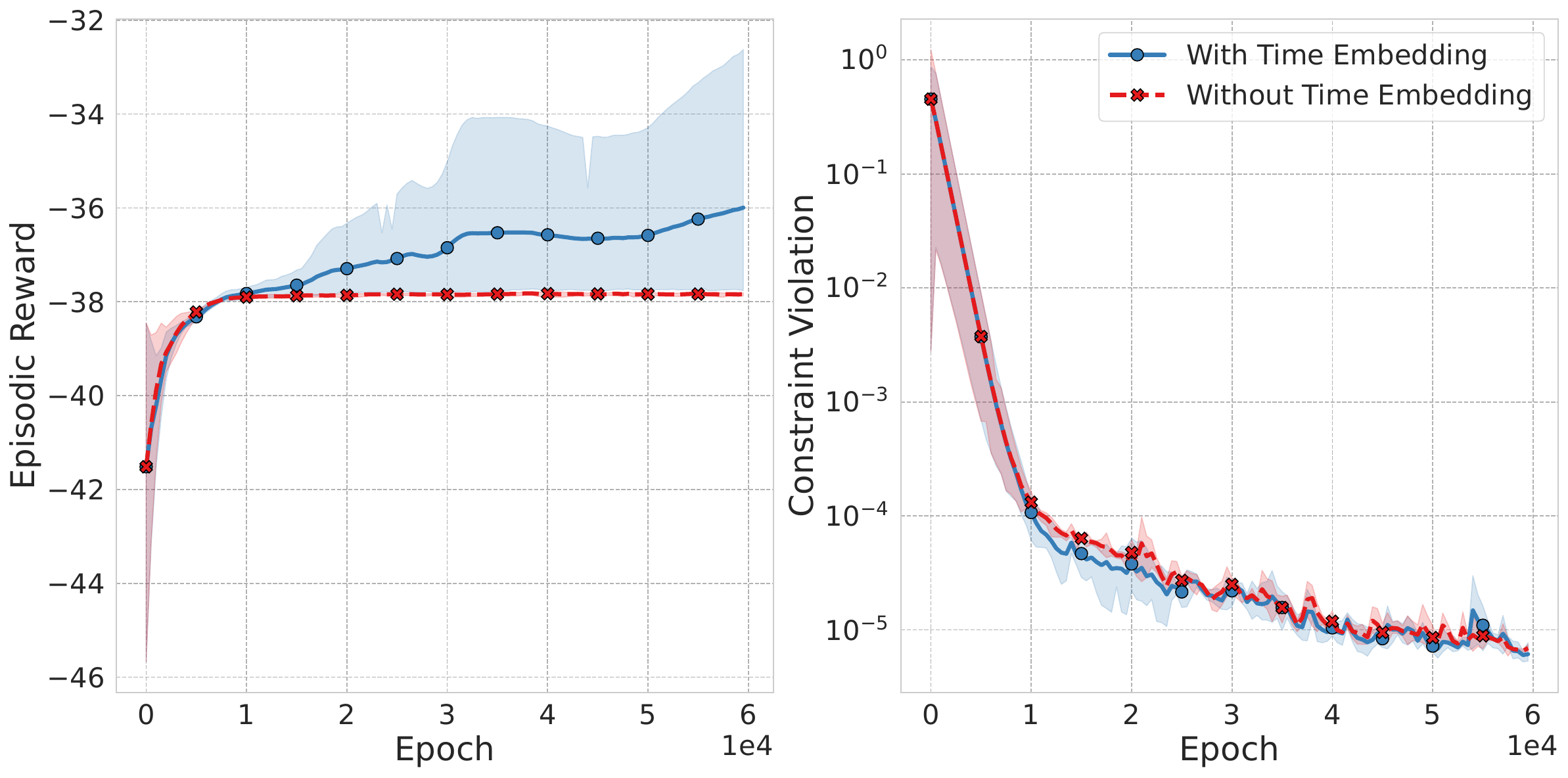}
    \caption{The influence of demand information embedding.}
    \label{fig:with demand}
\end{figure}

\textbf{Evaluation of Lower-Level Resource Allocation.} Figure \ref{fig:lower-level} compares our dynamic priority-based allocation against a baseline average distribution method. While both maintain constraint violations within a negligible $10^{-4}$ margin, our approach yields vastly superior reward accumulation. By dynamically adjusting resources based on instantaneous EV charging urgency rather than static uniform distribution, the proposed strategy ensures optimal EVCS capacity utilization during demand fluctuations.

\begin{figure}[h]
    \centering
    \includegraphics[width=0.8\linewidth]{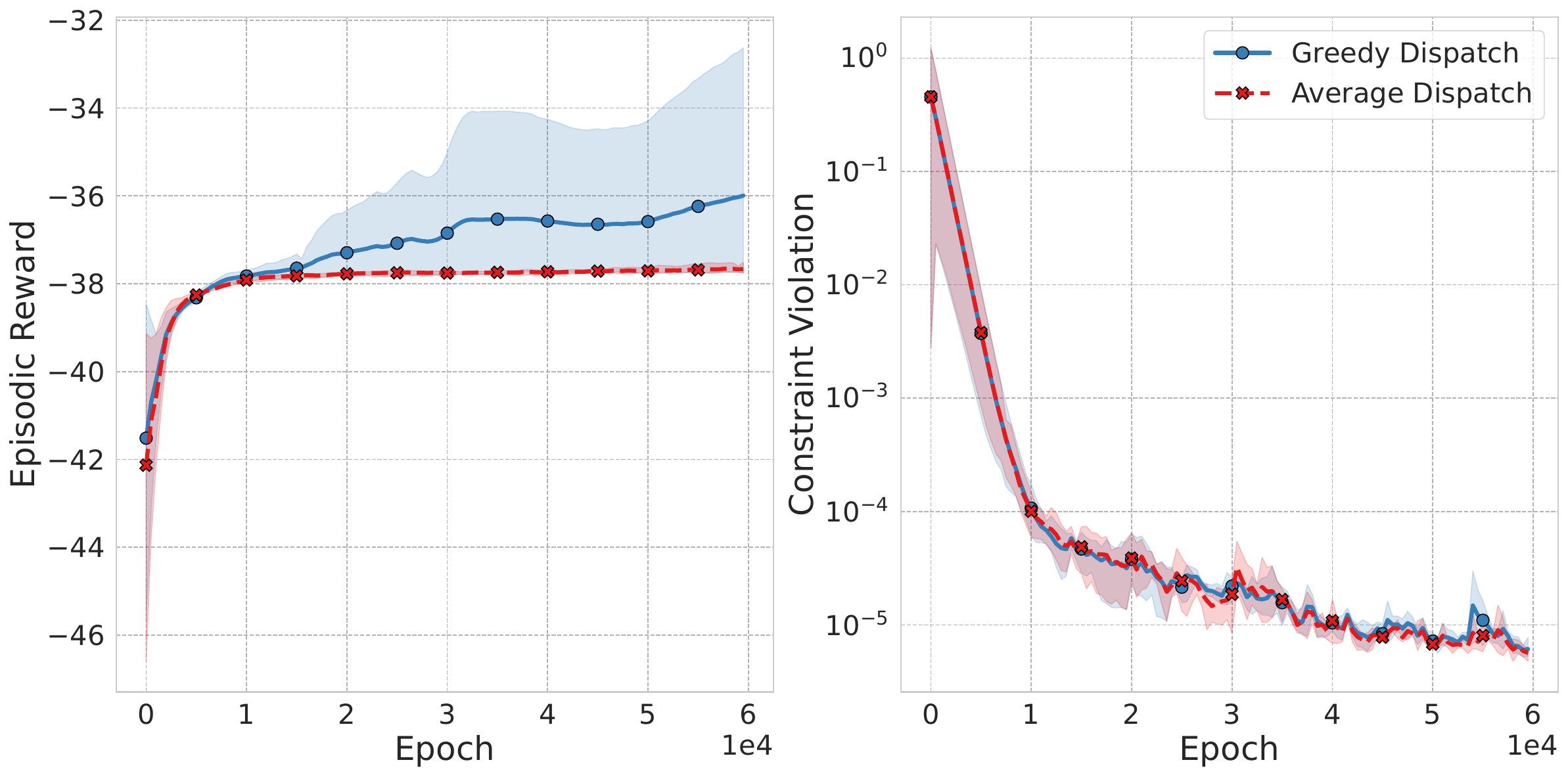}
    \caption{Comparison between Priority-Based Allocation and Average Allocation.}
    \label{fig:lower-level}
\end{figure}

\section{CONCLUSION}
In this paper, we propose the HPC-RL framework, a physics-integrated hierarchical solution designed to address the complex coupling of spatial grid constraints and temporal EV charging demands in V2G scheduling. By analytically embedding the power flow equations into policy optimization via the GRG method, our approach keeps policy updates within the tangent space of the power-flow manifold and recovers nonlinear equality feasibility up to numerical tolerance through Newton correction. This mechanism reduces reliance on soft penalty terms and improves the enforcement of hard physical constraints.

Furthermore, the proposed Dynamic Boundary Strategy successfully transforms long-term, temporally coupled charging obligations into instantaneous feasible operational regions, effectively bridging the gap between real-time grid dispatch and sequential battery management. Experimental evaluations on IEEE 14, 30, and modified 141-bus systems demonstrate that HPC-RL achieves a practical trade-off among feasibility, demand satisfaction, and online efficiency. It substantially outperforms state-of-the-art RL baselines in constraint satisfaction and demand fulfillment, while serving as a much faster feasible alternative to MPC with a moderate objective-cost gap. These results validate HPC-RL as a computationally tractable and physically reliable paradigm for large-scale V2G coordination. Future work will extend this framework to stochastic renewable generation, where renewable outputs can be incorporated as time-varying states or forecast inputs, and to more complex distribution network topologies.

\section*{Acknowledgment}
This work was supported by the National Natural Science Foundation of China (62303319, 62406195), HPC Platform of ShanghaiTech University, and Key Laboratory of Intelligent Perception and Human-Machine Collaboration (Shanghaitech University), Ministry of Education.

\bibliographystyle{IEEEtran}
\bibliography{reference}

\vfill
\end{document}